\numberwithin{equation}{section}
\renewcommand{\baselinestretch}{1.1}
\theoremstyle{plain}
\newtheorem{theorem}{Theorem}[section]
\newtheorem{proposition}[theorem]{\bf{Proposition}}
\renewcommand{\baselinestretch}{1.1}
\date {}
\begin{document}

\title[L2 norm preserving flow in matrix geometry]{L2 norm preserving flow in matrix geometry}

\author{Jiaojiao Li}

\address{Jiaojiao Li, Department of mathematics \\
Henan Normal university \\
Xinxiang, 453007 \\
China}

\email{lijiaojiao8219@163.com}

\thanks{ The research is partially supported by the National Natural Science
Foundation of China (No.)}

\begin{abstract}
In this paper, we study  L2 norm preserving heat flow in matrix geometry.
We show that this flow preserves the operator convex property and
enjoys the entropy stability in any finite time. Interesting
properties of this flow like conserved trace free property are also
derived.

{ \textbf{PACS}: 03.67.-a}

{ \textbf{Keywords}: global flow, norm conservation, entropy,
operator convexity }
\end{abstract}

\maketitle

\section{Introduction}
In this paper we continue our study of evolution equation in the
matrix geometry\cite{L}. We introduce the L2 norm preserving flow such that
starting from any initial data with unit L2 norm, we can produce a
family of matrices of unit L2 norm such that their limit is a
eigen-matrix of the Laplacian operator introduced in \cite{L} and
\cite{D}. In \cite{D}, the author introduces the Ricci flow which
exists globally when the initial data is a positive definite. The
Ricci flow preserves the trace of the initial matrix and the flow
converges the scalar matrix with the same trace as the initial
matrix. In \cite{L}, we have introduced the heat equation, which
also preserves the trace of the initial matrix. The advantage of the
heat equation is that the corresponding flow not only preserves the
trace but also can be with any initial matrix. In \cite{CL2} and
\cite{MC}, The authors introduce the norm preserving flows which are
global flow and converge to eigenfunctions. In quantum information
theory, we can see the interesting works \cite{Fei1}\cite{Fei2}\cite{Fei3} and \cite{Wu1}, 
but we need evolution flows which preserve the some norms like
the L2 norm just like the normalized Ricci flow \cite{H} preserves
the volume and the curve flows \cite{MC} \cite{MZ} preserve length
or area. This is the main topic of the study of norm preserving flow
in matrix geometry.

More precisely, we study again the matrix geometry model as in \cite{L} \cite{D}\cite{Ma}.
Let $X, Y$ be two Hermitian matrices on $C^{n}$ .
Define $U=e^{\frac{2\pi i}{n}X}$, $V=e^{\frac{2\pi i}{n}Y}$. We use
$M_{n}$ to denote the algebra of all $n\times n$ complex matrices
generated by $U$ and $V$ with the bracket $\{u,v\}=uv-vu$. Then
$CI$, which is the scalar multiples of the identity matrices $I$, is
the commutant of the operation $\{u,v\}$. Sometimes we simply use
$1$ to denote the $n\times n$ identity matrix.

We define two derivations $\delta_{1}$ and $\delta_{2}$ on the
algebra $M_{n}$ by the commutators
$$\delta_{1}:=[y,\cdot]\,\,\, \,\,\,  \delta_{2}:=-[x,\cdot]$$
Define the Laplacian operator on $M_{n}$ by
$$\hat{\Delta} =\delta_1^*\delta_1+\delta_2^*\delta_2=-\delta_{1}^{2}-\delta_{2}^{2}=-\delta_{\mu}\delta_{\mu},$$
where we have used the Einstein sum convention. We use the
Hilbert-Schmidt norm $|\cdot |$ defined by the inner product
$$<a,b>:=(a,b):=\tau(a^{*}b)$$
on the algebra $M_{n}$ and let $\sigma(a)=<1,a>$. Here $a^{*}$ is the complex conjugate of the
matrix $a$, $\tau$ denotes the usual trace function on $M_{n}$. We
now state basic properties of $\delta_{1}$, $\delta_{2}$ and
$\hat{\Delta}$ (see also \cite{D}).

For any $c\in M_n$, we define the Dirichlet energy
$$
D(c)=\sum <\delta_\mu c,\delta_{\mu} c>
$$
and the L2 mass
$$
M(c)=<c,c>.
$$
Let, for $c\not=0$,
$$
\lambda (c)=\frac{D(c)}{M(c)}.
$$
Then the eigenvalues of the Laplacian operator $\hat{\Delta}$ correspond
to the critical values of the Dirichlet energy $D(c)$ on the
constrain subset
$$
\Sigma=\{c\in M_n; M(c)=1\}.
$$
Hence to preserve the $L2$ norm, it is nature to study the evolution
equation
\begin{equation}\label{flow}
c_t=-\hat{\Delta} c+\lambda (c) c
\end{equation}
with its initial matrix $c|_{t=0}=c_0\in M_n$. Assume $c=c(t)$ is
the solution to the flow above. Then
$$
\frac{1}{2}\frac{d}{dt}M(c)=<c,c_t>=-<c,\hat{\Delta} c>+D(c).
$$
Since $<c,\hat{\Delta} c>=D(c)$, We know that $\frac{d}{dt}M(c)=0$. Then
$$
M(c(t))=M(c_0).
$$
The aim of this paper is to show that we always has a global flow to
(\ref{flow}) and the flow has many very nice properties like entropy
stability and operator convex preserving. Our main results, Theorem
\ref{global}, Theorem \ref{conver}, Theorem \ref{entropy}, and
Theorem \ref{convex} are contained in sections
\ref{sect1},\ref{sect2},\ref{sect3}, and \ref{sect4}.

\section{existence of the global flow}\label{sect1}

We first consider the local existence of the flow (\ref{flow}). We
prefer to follow the standard notation and we let $\Delta
=-\hat{\Delta}$. Let $a=a(t)\in M_n/CI$ be such that
\begin{equation}\label{eq1}
a_t=\Delta a+\lambda(t)a,
\end{equation}
with the initial data $a|_t=0 =a_0$. Here $a_0\in M_n/CI$ such that
$|a_0|^2=1$ and $\tau(a_0):=\bar{a}_0=0$. Then for $a=a(t)$, we let
\begin{equation}\label{lam}\lambda(t)=\frac{|\delta_\mu a|^2}{|a|^2}=-\frac{(\Delta
a,a)}{(a,a)}.\end{equation}

Formally, if the flow  (\ref{eq1}) exists, we then compute that
$$\frac{d}{dt}|a|^2=2(a,a_t)=2(a,\Delta a)+2\lambda(t)(a,a)=0.$$
Then $|a|^2(t)=|a|^2(0)=1,\forall t>0.$

Our main goal in this section is to show that it exists a global
solution to equation (\ref{eq1}) for any $a_0\in M_n/CI$ with
$|a_0|^2=1$.

Assume at first that $\lambda(t)\geq 0$ is any given continuous
function and $a=a(t)$ is the corresponding solution of (\ref{eq1}).
Define $b=e^{-\int \lambda(t)dt}a$. Then $b(0)=a(0)$ and we get

\begin{eqnarray} 
b_t &= & e^{-\int \lambda(t)d_t}(-\lambda)a+e^{-\int \lambda(t)d_t}a_t\\
&=& -\lambda ae^{-\int \lambda(t)d_t}+e^{-\int \lambda(t)dt}(\Delta a+\lambda a)\\
&=& e^{-\int \lambda(t)dt}\Delta a\\
&=& \Delta (e^{-\int \lambda(t)dt}a)\\
&=& \Delta b.\label{Delta b}
\end{eqnarray}

The equation (\ref{Delta b}) can be solved by standard way. For
completeness we recall it here. Assume $(\varphi_i)$ and
$(\lambda_i)$ are eigen-matrices and eigenvalues of $\Delta$, as we
introduced in \cite{L}, such that
$$-\Delta \varphi_i=\lambda_i\varphi_i,\,\,<\varphi_i, \varphi_j>=\delta_{ij}.$$
Note that $\lambda_i\geq 0.$

Assume that $b=b(t)$ is the solution to (\ref{Delta b}). Set
$$b=\sum<b,\varphi_i>\varphi_i:=\sum u_i\varphi_i,\,\,u_i\in
R,u_i=u_i(t).$$ Then by (\ref{Delta b}), we obtain
$$(u_i)_t\varphi_i=\Delta(u_i\varphi_i)=-u_i\lambda_i\varphi_i.$$
Then $(u_i)_t=-u_i\lambda_i$, and $u_i=u_i(0)e^{-\lambda_it}$.

Hence
\[
\begin{aligned}\label{eq4}
b=\sum u_i(0)e^{-\lambda_it}\varphi_i,
\end{aligned}
\]
and
\begin{equation}\label{eq5}
a=\sum u_i(0)e^{-\lambda_it+\int \lambda dt}\varphi_i
\end{equation}
solves (\ref{eq1}) with the given $\lambda(t)$.

We now define a iteration relation to solve (\ref{eq1}) for the
unknown $\lambda(t)$ given by (\ref{lam}).

Define $a_1$ such that $a_1$ solves the equation $a_t=\Delta
a+\lambda_0 a$ with $\lambda_0=-\frac{(\delta a_0,
a_0)}{(a_0,a_0)}$.

Let $k\geq 1$ be any integer. Define $a_{k+1}$ such that
\begin{equation}\label{eq6}
(a_{k+1})_t=\Delta a_{k+1}+\lambda_k(t)a_{k+1},\,\,\,a_{k+1}(0)=a_0,
\end{equation}
with
\begin{equation}\label{eq7} \lambda_k(t)=-\frac{(\Delta a_k,
a_k)}{(a_k,a_k)}=\frac{|\delta_\mu a_k|^2}{(a_k,a_k)}.
\end{equation}
Then using the formula (\ref{eq5}),we get a sequence $(a_k)$. We
claim that $(a_k)\subset M_n/CI$ is a bounded sequence and
$(\lambda_k(t))$ is also a bounded sequence. It is clear that
$(a_k)\subset M_n/CI$. If this claim is true, we may assume
$$a_k\rightarrow a,\,\, \lambda_k(t)\rightarrow \tilde{\lambda}(t).$$
Then by (\ref{eq6}) and (\ref{eq7}), we obtain
$$a_t=\Delta a+\tilde{\lambda}(t)a$$
and
$$\tilde{\lambda}(t)=-\frac{(\Delta a, a)}{(a,a)},$$
which is the same as (\ref{eq1}). That is to say, $a=a(t)$ obtained
above is the desired solution to (\ref{eq1}).

We first prove the Claim in a small interval $[0,T]$. Assume
$|a_k|\leq A=1.5$ and $|\lambda_k|\leq B=\log (4/3)$ on $[0,T]$,
$T=1/2$. Then, by(\ref{eq6}),
\begin{eqnarray}
\frac{1}{2}|a_{k+1}|^2_t &=&(a_{k+1}, (a_{k+1})_t)\\
&=&(a_{k+1}, \Delta a_{k+1})+\lambda_k|a_{k+1}|^2\\
&=&-|\delta a_{k+1}|^2+\lambda_k |a_{k+1}|^2.\label{eq8}
\end{eqnarray}
By (\ref{eq7}), we obtain $\lambda_{k+1}=\frac{|\delta_\mu
a_{k+1}|^2}{|a_{k+1}|^2}$. Then
$$|\delta_\mu a_{k+1}|^2=\lambda_{k+1}|a_{k+1}|^2.$$
By (\ref{eq8}), we get
\[
\begin{aligned}
\frac{1}{2}|a_{k+1}|^2_t &=-\lambda_{k+1}|a_{k+1}|^2+\lambda_k|a_{k+1}|^2\\
&=(\lambda_k-\lambda_{k+1})|a_{k+1}|^2.
\end{aligned}
\]
Then
$|a_{k+1}|^2=e^{2\int(\lambda_k-\lambda_{k+1})dt}|a_0|^2=e^{2Bt}\leq
A$.

Recall that $\bar{a}_0=0$. Then
$$\partial_t(\bar{a}_{k+1})=(\Delta a_{k+1}, 1)+\lambda_k(a_{k+1}, 1)=\lambda_k\bar{a}_{k+1},$$
so $$\bar{a}_{k+1}=e^{\int \lambda_k dt}\bar{a}_{k+1}(0)=e^{\int \lambda_k dt}a_0=0.$$

Note that since $\bar{a}_{k+1}=0$ and norm equivalence relation,
\[
\begin{aligned}
\lambda_{k+1}|a_{k+1}|^2 &=|\delta_\mu a_{k+1}|^2\\
&\leq C|a_{k+1}-\bar{a}_{k+1}|^2\\
&=C|a_{k+1}|^2.
\end{aligned}
\]
Then $\lambda_{k+1}\leq C$. Hence the Claim is true in $[0,T]$.

Therefore, (\ref{eq1}) has a solution in $[0,T]$. By iteration we
can get s solution in $[T,2T]$ with $u(T)$ as the initial data. We
can iterate this step on and on and we get a global solution to
(\ref{eq1}) with initial data $a_0$.

In conclusion we have the below.
\begin{theorem}\label{global}
For any given initial matrix $a_0\in M_n/CI$ with
$|a_0|^2=1$, The equation (\ref{eq1}) has a global solution with
$a_0$ as its initial data and $|a(t)|^2=1$ for all $t>0$.
\end{theorem}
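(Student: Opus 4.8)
The plan is to establish global existence for the $L^2$-preserving flow \eqref{eq1} via the iteration scheme already set up in the excerpt, by proving the \emph{Claim} that the sequences $(a_k)$ and $(\lambda_k(t))$ remain uniformly bounded on a fixed time interval $[0,T]$, and then extending the solution to all of $[0,\infty)$ by repeating the construction. The backbone of the argument is contained in formulas \eqref{eq5}, \eqref{eq6} and \eqref{eq7}: once we know $\lambda_k(t)$ is a continuous, nonnegative, bounded function, the explicit spectral representation \eqref{eq5} produces $a_{k+1}$ and shows it lies in $M_n/CI$. So the entire existence question reduces to uniform a priori control of the $(a_k)$ and $(\lambda_k)$.

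First I would verify the two structural facts that make the bootstrap work. The trace-free property is preserved along the iteration: differentiating $\bar a_{k+1}=\langle 1,a_{k+1}\rangle$ and using that $\langle\Delta a_{k+1},1\rangle=0$ (since $\Delta$ is a sum of commutators, hence has range orthogonal to $CI$) gives $\partial_t\bar a_{k+1}=\lambda_k\bar a_{k+1}$, and with $\bar a_0=0$ we conclude $\bar a_{k+1}(t)\equiv 0$. This is exactly what licenses the use of the norm-equivalence (Poincar\'e-type) inequality $|\delta_\mu c|^2\le C\,|c-\bar c|^2$ on the quotient $M_n/CI$, which in turn bounds $\lambda_{k+1}=|\delta_\mu a_{k+1}|^2/|a_{k+1}|^2\le C$ uniformly in $k$. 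For the mass bound I would run the energy identity \eqref{eq8}: substituting $\lambda_{k+1}|a_{k+1}|^2=|\delta_\mu a_{k+1}|^2$ turns it into $\tfrac12\partial_t|a_{k+1}|^2=(\lambda_k-\lambda_{k+1})|a_{k+1}|^2$, which integrates to $|a_{k+1}(t)|^2=\exp\!\big(2\int_0^t(\lambda_k-\lambda_{k+1})\,ds\big)$. On $[0,T]$ with $T=1/2$ and the standing hypotheses $|\lambda_k|\le B=\log(4/3)$ the exponent is at most $2Bt\le 2B\cdot\tfrac12=\log(4/3)$, so $|a_{k+1}|^2\le 4/3\le A^2$ with $A=1.5$; thus the inductive bounds close on themselves and the Claim holds on $[0,T]$.

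With uniform bounds in hand, I would extract a convergent subsequence $a_k\to a$, $\lambda_k\to\tilde\lambda$ (using compactness in the finite-dimensional space $M_n/CI$ together with equicontinuity from the ODE for the spectral coefficients), pass to the limit in \eqref{eq6}--\eqref{eq7}, and identify $\tilde\lambda(t)=-(\Delta a,a)/(a,a)$, so that the limit $a=a(t)$ solves \eqref{eq1} with the \emph{self-consistent} $\lambda$ of \eqref{lam}. The unit-norm conservation $|a(t)|^2=1$ then follows directly from the formal computation already displayed, $\frac{d}{dt}|a|^2=2(a,\Delta a)+2\lambda|a|^2=0$ using $\lambda=-(\Delta a,a)/(a,a)$ and $|a_0|^2=1$. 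Finally, since $T=1/2$ is independent of the initial data, I would re-solve on $[T,2T]$ with $a(T)$ as new initial datum and iterate; because the unit-norm and trace-free conditions are both preserved, every restart satisfies the same hypotheses, and concatenating the pieces yields a global solution on $[0,\infty)$.

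The step I expect to be the main obstacle is the passage to the limit: proving that the iteration actually \emph{converges} rather than merely stays bounded. Boundedness of $(a_k)$ and $(\lambda_k)$ only yields a convergent subsequence, and for the limit to satisfy the self-consistent equation one must control the nonlinear coupling—$\lambda_k$ depends on $a_k$ through the ratio \eqref{eq7}, whose denominator $|a_k|^2$ must be bounded away from zero so that $\lambda_k$ inherits the convergence of $a_k$. Here the trace-free Poincar\'e inequality is again essential, as it prevents $|a_k|$ from degenerating. A fully rigorous treatment would either establish that the map $a_k\mapsto a_{k+1}$ is a contraction on a suitable ball in $C([0,T];M_n/CI)$ for $T$ small (giving convergence of the whole sequence, not just a subsequence, and uniqueness as a bonus), or invoke the Arzel\`a--Ascoli theorem after verifying equicontinuity of $(a_k)$ via the ODE bounds; I would favor the contraction route, since the explicit exponential solution formula \eqref{eq5} makes the Lipschitz dependence of $a_{k+1}$ on $\lambda_k$ transparent.
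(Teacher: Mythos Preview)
Your proposal follows essentially the same route as the paper: the same iteration scheme \eqref{eq6}--\eqref{eq7}, the trace-free preservation via $\partial_t\bar a_{k+1}=\lambda_k\bar a_{k+1}$, the Poincar\'e-type bound $\lambda_{k+1}\le C$, the energy identity \eqref{eq8} yielding $|a_{k+1}|^2=\exp\big(2\int_0^t(\lambda_k-\lambda_{k+1})\,ds\big)$, and the extension to $[0,\infty)$ by restarting on successive intervals of fixed length. Your treatment of the passage to the limit (contraction or Arzel\`a--Ascoli, with the explicit need for a lower bound on $|a_k|^2$) is in fact more careful than the paper's, which simply asserts $a_k\to a$, $\lambda_k\to\tilde\lambda$ from boundedness without further justification.
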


\section{basic properties preserved by the flow}\label{sect2}

In this section we show that there are two properties of the initial
matrix are preserved except the conservation of the $L^2$ NORM.

 We
show that if the initial matrix is positive definite, then along the
flow (\ref{eq1}), the evolving matrix is also positive definite.
\begin{theorem}\label{positive}
Assume $a_0>0$, that is $a_0$ is a Hermitian positive definite.
Then $a(t)>0, \forall t>0$ along the flow equation
$$a_t=\Delta a+\lambda(t)a$$
with $a(0)=a_0$, where $\lambda(t)$ is given by (\ref{lam}).
\end{theorem}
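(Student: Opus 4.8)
The plan is to peel off the reaction term and reduce to the linear heat flow, then show that this heat flow preserves positivity. First I would repeat the gauge substitution from Section~\ref{sect1}: setting $b=e^{-\int_0^t\lambda\,ds}\,a$ gives $b_t=\Delta b$ with $b(0)=a_0$, and since $a=e^{\int_0^t\lambda\,ds}\,b$ with a strictly positive scalar factor, we have $a(t)>0$ if and only if $b(t)>0$. So it suffices to prove that the solution of $b_t=\Delta b$, $b(0)=a_0>0$, stays positive definite. I would also record that $\Delta$ preserves Hermiticity: since $x,y$ are Hermitian one checks $(\Delta a)^{*}=\Delta(a^{*})$, so $b(t)$ is Hermitian for all $t$ and the statement $b(t)>0$ is meaningful.

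The structural key is that $\Delta$ is a double-commutator (Lindblad-type) generator. Using $\delta_1=[y,\cdot]$, $\delta_2=-[x,\cdot]$ and the self-adjointness of $\delta_\mu$ with respect to $(\cdot,\cdot)$, one computes
$$
\Delta a=-[y,[y,a]]-[x,[x,a]]=2(xax+yay)-Ha-aH,\qquad H:=x^{2}+y^{2}\ge0.
$$
Write $\Delta=\Phi-\mathcal{R}$, where $\Phi(a)=2(xax+yay)$ and $\mathcal{R}(a)=Ha+aH$. The map $\Phi$ is completely positive; in particular $\Phi(a)\ge0$ whenever $a\ge0$, since $v^{*}(xax)v=(xv)^{*}a(xv)\ge0$ for Hermitian $x$. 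I would then strip off the dissipative part $\mathcal{R}$ by an integrating factor: set $c(t)=e^{tH}b(t)e^{tH}$, so that the terms $Ha+aH$ cancel and $c_t=e^{tH}\Phi(b)e^{tH}=\Psi_t(c)$, where $\Psi_t(\cdot)=e^{tH}\Phi\!\left(e^{-tH}(\cdot)e^{-tH}\right)e^{tH}$. For each fixed $t$ the map $\Psi_t$ is a composition of the conjugations $(\cdot)\mapsto e^{\pm tH}(\cdot)e^{\pm tH}$ with the positive map $\Phi$, hence $\Psi_t$ sends positive semidefinite matrices to positive semidefinite matrices.

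Now $c$ solves the linear integral equation $c(t)=a_0+\int_0^t\Psi_s(c(s))\,ds$. Running the Picard iteration $c^{(0)}\equiv a_0$, $c^{(m+1)}(t)=a_0+\int_0^t\Psi_s(c^{(m)}(s))\,ds$, and using that $a_0\ge0$, that each $\Psi_s$ preserves positivity, and that the positive semidefinite cone is closed under integration and under the uniform limit of the iterates, I would conclude $c(t)\ge0$, hence $b(t)=e^{-tH}c(t)e^{-tH}\ge0$. To upgrade to strict positivity I use that $a_0>0$ means $a_0\ge\varepsilon I$ for some $\varepsilon>0$; then from $c(t)=a_0+\int_0^t\Psi_s(c(s))\,ds$ together with $c(s)\ge0$ we get $c(t)\ge\varepsilon I$, so $b(t)=e^{-tH}c(t)e^{-tH}\ge\varepsilon\,e^{-2tH}>0$, and therefore $a(t)=e^{\int_0^t\lambda\,ds}\,b(t)>0$ for all $t>0$.

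I expect the positivity-preservation step to be the main obstacle. The naive approach is to track the least eigenvalue $\lambda_{\min}(b(t))$ and argue it cannot reach zero, but in the noncommutative setting $\lambda_{\min}$ is merely Lipschitz (eigenvalue crossings break differentiability) and the minimizing eigenvector rotates in time, so a direct maximum-principle computation is hard to make rigorous. The integrating-factor reduction to the completely positive, time-dependent generator $\Psi_t$ is precisely what converts the question into a clean Picard/cone argument and avoids these analytic difficulties; verifying complete positivity of $\Phi$ and the closedness of the positive semidefinite cone under the relevant limits is the technical heart of the proof.
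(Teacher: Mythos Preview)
Your argument is correct, but it is quite different from the paper's. The paper never reduces to the heat flow or invokes the Lindblad form. Instead it works directly with $a(t)$: while $a>0$ one differentiates $\log\det a$ and uses the trace identity
\[
(a^{-1},\Delta a)=-(\delta_\mu a^{-1},\delta_\mu a)=(a^{-1}\delta_\mu a\cdot a^{-1},\delta_\mu a)=|a^{-1}\delta_\mu a|^{2}\ge 0,
\]
obtained from the derivation rule $\delta_\mu(a^{-1})=-a^{-1}(\delta_\mu a)a^{-1}$. Together with $\lambda(t)\ge 0$ this gives $\frac{d}{dt}\log\det a\ge 0$, so $\det a(t)\ge\det a_0>0$; by continuity no eigenvalue of $a(t)$ can reach zero, and positivity persists for all $t$.

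The trade-off is scope versus economy. The paper's proof is a three-line determinant computation specific to this generator and to the fact that $\lambda(t)\ge 0$. Your route is longer but more structural: it exhibits the heat semigroup $e^{t\Delta}$ as a genuinely positive (indeed completely positive) map on $M_n$, which is a stronger statement than mere preservation of positive definiteness and would apply verbatim to any generator of the form $\sum_j(2L_j\cdot L_j^{*}-L_j^{*}L_j\cdot-\cdot L_j^{*}L_j)$. Your approach also handles the $\lambda$ term cleanly via the gauge factor, while the paper uses $\lambda\ge 0$ explicitly. Either proof is acceptable; just be aware that the intended argument here is the short determinant one.
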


\begin{proof} By continuity, we know that $a(t)>0$  for small $t>0$.
Compute
$$\frac{d}{dt}\log\det a=(a^{-1}, a_t)=(a^{-1}, \Delta a)+N\lambda(t)$$
where $N=\sigma(I)$.

Since
$$(a^{-1}, \Delta a)=-(\delta_\mu a^{-1}, \delta_\mu a)=(a^{-1}\delta_\mu a\cdot a^{-1}, \delta_\mu a)=|a^{-1}\delta_\mu a|^2.$$
We know that
$$\frac{d}{dt}\log\det a=|a^{-1}\delta_{\mu} a|^2+N\lambda(t)\geq N\lambda(t)\geq 0.$$
Hence, we have $a(t)>0$, $\forall t>0$.
\end{proof}
Remark that by continuity, we can show that if $a_0\geq 0$, then
$a(t)\geq 0$ along the flow (\ref{eq1}).

\begin{proposition}
Assume $\sigma(a_0)=\bar{a}_0=0$. Then $\bar{a}_t=0$, $\forall t>0$.
\end{proposition}

\begin{proof}
\[
\begin{aligned}
\frac{d}{dt}\bar{a} &=\sigma(a_t)\\
&=\sigma(\Delta a+\lambda(t)a)\\
&=\sigma(\Delta a)+\lambda(t)\sigma(a)\\
&=\lambda(t)\bar{a},
\end{aligned}
\]
so $\bar{a}(t)=\bar{a}(0)e^{\int^t_0\lambda(t)dt}=0$.
\end{proof}

\section{Convergence of the flow $a=a(t)$ at $\infty$}\label{sect3}
We prove the the global flow converges to some eigen-matrix.
\begin{theorem}\label{conver}
For any given initial matrix $a_0\in M_n/CI$ with
$|a_0|^2=1$, the global solution to the equation (\ref{eq1})  with
$a_0$ as its initial data converges to some eigen-matrix with its
eigenvalue $\lambda \geq \lambda_1$.
\end{theorem}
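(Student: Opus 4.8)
The plan is to use the gradient-flow structure of the problem. The key observation is that the normalized flow (\ref{eq1}) is the $L^2$-gradient flow of the Dirichlet energy $D(c)$ restricted to the unit sphere $\Sigma=\{c\in M_n/CI; |c|^2=1\}$, because $\lambda(t)$ is exactly the Lagrange multiplier that projects $-\Delta a$ onto the tangent space of $\Sigma$. The first step I would carry out is to verify that $\lambda(t)=D(a)/M(a)$ is monotone along the flow. Computing $\frac{d}{dt}\lambda(t)$ using $|a|^2\equiv 1$, one finds
\[
\frac{d}{dt}\lambda(t)=\frac{d}{dt}D(a)=2(\Delta a,a_t)\cdot(-1)=-2|\Delta a+\lambda a|^2\leq 0,
\]
after substituting $a_t=\Delta a+\lambda a$ and using $(\Delta a,a)=-D(a)=-\lambda$. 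Hence $\lambda(t)$ is nonincreasing and bounded below by $\lambda_1\geq 0$ (the smallest nonzero eigenvalue, since $a\in M_n/CI$ forces $\bar a=0$), so $\lambda(t)$ converges to some limit $\lambda_\infty\geq\lambda_1$ as $t\to\infty$.

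The second step is to extract convergence of $a(t)$ itself from the energy decay. Since $\lambda(t)$ is monotone and bounded, the integral $\int_0^\infty|a_t|^2\,dt=\int_0^\infty|\Delta a+\lambda a|^2\,dt=-\tfrac12\int_0^\infty\lambda'(t)\,dt=\tfrac12(\lambda(0)-\lambda_\infty)<\infty$ is finite. This forces $|a_t|\to 0$ along a sequence of times $t_j\to\infty$. Because the whole problem lives in the finite-dimensional space $M_n/CI$ and $a(t)$ stays on the compact sphere $\Sigma$, I can pass to a subsequence along which $a(t_j)\to a_\infty\in\Sigma$. The relation $|a_t|\to 0$ along this subsequence then gives $\Delta a_\infty+\lambda_\infty a_\infty=0$, i.e. $-\Delta a_\infty=\lambda_\infty a_\infty$, so $a_\infty$ is an eigen-matrix with eigenvalue $\lambda_\infty\geq\lambda_1$.

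The third step, and the main obstacle, is upgrading subsequential convergence to convergence of the full flow. Subsequential convergence to an eigen-matrix does not by itself rule out the flow drifting among different eigen-matrices sharing the eigenvalue $\lambda_\infty$, if that eigenspace has dimension greater than one. In the finite-dimensional setting I expect this can be handled by a Lojasiewicz–Simon type gradient inequality, or more directly by exploiting the explicit spectral representation (\ref{eq5}): writing $a(t)=\sum_i u_i(t)\varphi_i$ and tracking the evolution of the coefficients $u_i$, the modes with $\lambda_i>\lambda_\infty$ decay relative to those with $\lambda_i=\lambda_\infty$, so the projection onto the $\lambda_\infty$-eigenspace dominates and stabilizes. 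I would make this precise by showing the ratios $u_i(t)/\|P_\infty a(t)\|$ converge, where $P_\infty$ is the spectral projection onto the $\lambda_\infty$-eigenspace, which pins down a unique limiting direction on $\Sigma$.

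Finally I would confirm the eigenvalue bound: since $a(t)\in M_n/CI$ has $\bar a=0$ for all $t$ by the Proposition above, the limit $a_\infty$ also satisfies $\bar a_\infty=0$, so it cannot be the trivial constant eigen-matrix and its eigenvalue must satisfy $\lambda_\infty\geq\lambda_1$, the smallest positive eigenvalue of $-\Delta$ on $M_n/CI$. This completes the argument, with the uniqueness-of-limit step being the one requiring the most care.
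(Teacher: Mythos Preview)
Your proof is correct and follows essentially the same route as the paper: the core step in both is the monotonicity computation $\frac{d}{dt}\lambda(t)=-2|\Delta a+\lambda a|^2=-2|(\Delta a)^\perp|^2\le 0$, after which the limit is identified as an eigen-matrix with eigenvalue at least $\lambda_1$ via $\bar a=0$. Your treatment is in fact more careful than the paper's, which simply writes ``we may assume $a(t)\to a_\infty$'' without addressing the subsequential-to-full convergence issue you correctly flag and propose to resolve via the explicit spectral decomposition or a Lojasiewicz-type argument.
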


 Recall
that we have the global flow $a=a(t)$ such that
$$a_t=\Delta a+\lambda(t)a, \bar{a}=0$$
with $\lambda(t)=|\delta_{\mu} a|^2\geq C^{-1}|a|^2=C^{-1}$, $ a(t)|_{t=0}=a_0$, $|a_0|^2=1$ and $|a(t)|^2=1.$

Set
\[
\begin{aligned}
\Delta a=(\Delta a, a)a+(\Delta a)^\bot,
\end{aligned}
\]
where $((\Delta a)^\bot , a)=0$.

Note that $\lambda(t)=-(\Delta a, a)$, so by the Schwartz
inequality, $\lambda(t)\leq |\Delta a|$.

Compute
\[
\begin{aligned}
\frac{d}{dt}\lambda(t) &=2(\nabla a,\nabla a_t)\\
&=-2(\Delta a, a_t)\\
&=-2(\Delta a, \Delta a+\lambda(t)a)\\
&=-2(\Delta a)^2-2\lambda(t)(\Delta a, a)\\
&=-2(\Delta a)^2+2(\Delta a, a)^2\\
&=-2|(\Delta a)^\bot|\\
&\leq 0.
\end{aligned}
\]
Then we may assume that
$$a(t)\rightarrow a_\infty$$
and $|a_\infty|^2=1$, $\bar{a}_\infty=0$,
${\lambda}_\infty=\lim_{t\rightarrow \infty}\lambda(t)$ and $(\Delta
a_\infty)^\bot=0$.

The latter condition implies that
$$((\Delta a_\infty)^\bot, a_\infty)=0,$$
which is $-\Delta a_\infty=(\Delta a_\infty, a_\infty)a_\infty=\lambda_\infty a_\infty$.

Since $\lambda_\infty\geq C^{-1}$, we know that $\lambda_\infty$ is the non-zero eigenvalue of $-\Delta$
and then $\lambda_\infty\geq \lambda_1>0$.

This completes the proof of Theorem \ref{conver}.

\section{Entropy stability of the flow $a=a(t)$}\label{sect4}
We first prove the HS norm stability of the flow (\ref{eq1}). Recall
that there is a uniform constant $C>0$ such that for any $u,v\in
M_n$ and with $|u|^2=1=|v|^2$,
$$
|\lambda(u)-\lambda(v)|\leq C |u-v|.
$$
Given two initial matric $u_{0}$, $v_{0}$. Let $u, v$ be the
corresponding solutions to (\ref{eq1})) with initial datum
$u_0,v_0$.

Note that
\[
\begin{aligned}
\frac{d}{dt}|u-v|^{2} & = 2<u-v,\Delta u-\Delta v> +2<u-v,\lambda(u)u-\lambda(v)v>\\
&\leq 2<u-v, \Delta (u-v)>+2|\lambda(u)-\lambda(v)||u-v|+|\lambda(v)||u-v|^2\\
&\leq C_1|u-v|^{2},
\end{aligned}
\]
Where $C_1$ is a uniform constant.
$$|u-v|^{2}\leq e^{C_1t}|u-v|^{2}(0)$$
which implies the HS norm stability of the flow in (\ref{eq1}) in
the finite time interval $[0,T]$.

Remark: Similarly, we have the Trace norm stability of solutions,
where the Trace norm is denoted by $T(u,v)$ \cite{NC}.
Recall that by the spectral theorem we may let $u-v=Q-P$,
where $Q$ and $P$ are positive definite operators with compact support. Then
$$T(u,v)=\tau(P)+\tau(Q)$$

In below, we assume $u_{0}>0$ and define the von Neumann entropy
\cite{NC} by
$$S(u)=-\tau (u\log u)$$
for the positive definite solution $u=u(t)$ with $u(0)=u_{0}$.\\
Recall the Fannes inequality \cite{L} for $\forall a, b \in M_{n}$
and $a>0\,\, b>0$, we have
$$|S(a)-S(b)|\leq \Omega\log d +\eta (\Omega),$$
where $\eta (s)=-s\log s$, $d=dim M_{n}$ and $\Omega=\sum
|r_{i}-s_{i}|\leq T(a,b)\leq \frac{1}{e}.$

Then we can use the Fannes inequality to get the entropy stability
of the solution of $(\ref{eq1})$.
\begin{theorem}\label{entropy}
If $T(u_{0},v_{0})\leq\frac{1}{e}$, $u_{0}>0\,\, v_{0}>0$ in
$M_{n}$, then the solution $u(t), v(t)$ to (\ref{eq1}) satisfies
$$|S(u_{t})-S(v_{t})|\leq C_1T(u, v)(0)\log d+\eta (C_1 T(u,v))(0).$$
\end{theorem}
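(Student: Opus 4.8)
The plan is to combine two ingredients that are already available: the Trace norm stability established in the Remark just above the theorem, and the Fannes inequality recalled immediately before the statement. First I would apply the Fannes inequality to the pair $a=u_t$ and $b=v_t$ at a fixed time $t$. Since $u_0>0$ and $v_0>0$, Theorem \ref{positive} guarantees that $u(t)>0$ and $v(t)>0$ for all $t>0$, so both are admissible positive definite matrices and the entropy $S(u_t)$, $S(v_t)$ is well defined along the flow. This gives
\[
|S(u_t)-S(v_t)|\leq \Omega(t)\log d+\eta(\Omega(t)),
\]
where $\Omega(t)=\sum|r_i-s_i|\leq T(u,v)(t)$ is the $\ell^1$ distance between the eigenvalue lists of $u(t)$ and $v(t)$.

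The key step is then to control $T(u,v)(t)$ by its initial value $T(u,v)(0)$. This is exactly the Trace norm stability asserted in the Remark preceding the theorem, which in parallel with the HS norm estimate yields a bound of the form $T(u,v)(t)\leq C_1\, T(u,v)(0)$ on the finite interval $[0,T]$; one would run the same Gr\"onwall-type argument as for $|u-v|^2$, using the contraction-type estimate $|\lambda(u)-\lambda(v)|\leq C|u-v|$ and the positivity from Theorem \ref{positive} to keep $T(u,v)$ in the regime where the spectral decomposition $u-v=Q-P$ is controlled. Substituting this bound into the monotone quantities $\Omega\mapsto\Omega\log d$ and $\Omega\mapsto\eta(\Omega)$ then produces
\[
|S(u_t)-S(v_t)|\leq C_1 T(u,v)(0)\log d+\eta\big(C_1 T(u,v)(0)\big),
\]
which is the claimed estimate.

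The main obstacle I anticipate is the validity of the Fannes inequality along the entire flow, not just at $t=0$. The Fannes inequality as stated requires $\Omega=\sum|r_i-s_i|\leq \frac{1}{e}$, and while the hypothesis supplies $T(u_0,v_0)\leq\frac{1}{e}$ at the initial time, the Trace norm can a priori grow like $e^{C_1 t}$, so for large $t$ the quantity $C_1 T(u,v)(0)$ may exceed $\frac{1}{e}$ and the inequality in its stated form would no longer apply directly. I would therefore need to either restrict attention to the time interval on which $T(u,v)(t)$ stays below the Fannes threshold $\frac{1}{e}$, or invoke a version of the entropy continuity estimate valid for larger spectral distances. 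Once the threshold issue is handled, the monotonicity of $s\mapsto s\log d$ and the concavity/monotonicity behaviour of $\eta$ on the relevant range make the final substitution routine.
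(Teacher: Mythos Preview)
Your proposal is correct and follows essentially the same route as the paper: apply the Fannes inequality at time $t$, invoke the Trace norm stability $T(u,v)(t)\leq C_1 T(u,v)(0)$ from the preceding Remark, and use the monotonicity of $\eta$ on $[0,\tfrac{1}{e}]$ to pass the bound inside. You are in fact more scrupulous than the paper about the threshold condition $\Omega\leq \tfrac{1}{e}$ at positive times; the paper's own proof simply invokes monotonicity of $\eta$ without addressing that point.
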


\begin{proof}
By the result above we have
$$T(u, v)(t)\leq C_1 T(u, v)(0),$$
by the Fannes inequality, we have
\[
\begin{aligned}
|S(u_{t})-S(v_{t})| & \leq T(u(t), v(t))\log d +\eta (T(u(t),v(t))\\
& \leq C_1T(u(0),v(0))\log d+C_1 \eta(T(u(0), v(0))),
\end{aligned}
\]
where we have used the monotonicity of the function $\eta$ in $[0,
\frac{1}{e}]$.
\end{proof}

\section{operator convexity of the heat equation}
In this section we prove SOME nonlinear convexity properties of the
initial matrix are preserved along the heat flow (\ref{heat}) below
\begin{equation}\label{heat}
a_t=\Delta a.
\end{equation}

Recall that we say $f:M_n\to M_n$ is \emph{operator convex} if for
any Hermitian symmetric matrices $A$ and $B$, we have
$$
\mu f(A)+(1-\mu)f(B)-f(\mu A+(1-\mu)B)\geq 0, \ \ \  \mu\in(0,1).
$$

Let $f(x)=x^2$. Then we compute,
\[
\begin{aligned}\label{eq11}
\Delta a^2 &=\delta_\mu(\delta_\mu a^2)\\
&=\delta_\mu(\delta_\mu a\cdot a+a\delta_\mu a)\\
&=\delta_\mu^2 a\cdot a+2(\delta_\mu a)^2+a\delta_\mu^2 a\\
&=\Delta a \cdot a+2(\delta_\mu a)^2+a \Delta a
\end{aligned}
\]
and
$$\partial_t(a^2)=aa_t+a_ta. $$
By $(\partial_t-\Delta)a=0$, we obtain
$(\partial_t-\Delta)a^2=-2(\delta_\mu a)^2$.

Hence, we have
\[
\begin{aligned}\label{eq12}
\frac{d}{dt}\log \det(a^2) &=\tau (a^{-2}(a^2)_t)\\
&=\tau(a^{-2}\Delta a^2)-2\tau(a^{-2}(\delta_\mu a)^2)\\
&=\tau(a^{-2}\Delta a^2)+2\tau(a^{-1}\delta_\mu a \cdot (\delta_\mu a)^\bot a^{-1})\\
&=\tau(a^{-2}\Delta a^2)+2\tau(a^{-1}\delta_\mu a\cdot (a^{-1}\delta_\mu a)^{\bot})\\
&\geq 0.
\end{aligned}
\]
We then deduce that $a^2>0$ provided $a_0^2>0$.

Assume that $\lambda>0$ is any positive constant. Recall that
$(\lambda+a)^{-1}>0$ if and only if $\lambda+a>0$.

Note that, by $\lambda+a_0>0$, we have $\lambda+a>0$.
Then $(\lambda+a)^{-1}>0$, $\forall t>0$.
\begin{theorem}\label{convex}
Assume $f$ is operator convex, if $f(a_0)>0$, then $f(a)>0$, $\forall t>0$.
\end{theorem}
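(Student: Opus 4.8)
My plan is to turn the two model computations already done for $f(x)=x^2$ and $f(x)=(\lambda+x)^{-1}$ into a single computation valid for an arbitrary operator convex $f$, and then to close with the same $\det$-monotonicity device used in Theorem \ref{positive} and in the $a^2$ computation above. The first task is to differentiate $f(a)$ along the heat flow $a_t=\Delta a$. Since each $\delta_\mu$ is an inner derivation, the Leibniz rule together with the Daleckii--Krein chain rule for $f$ at the Hermitian matrix $a$ gives $\delta_\mu f(a)=Df(a)[\delta_\mu a]$ and, applying $\delta_\mu$ once more, $\delta_\mu^2 f(a)=D^2 f(a)[\delta_\mu a,\delta_\mu a]+Df(a)[\delta_\mu^2 a]$, where $Df(a)[\cdot]$ and $D^2 f(a)[\cdot,\cdot]$ denote the first and second Fr\'echet derivatives. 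Summing over $\mu$ and using $\partial_t f(a)=Df(a)[a_t]=Df(a)[\Delta a]$, the first-order terms cancel and I obtain the key identity
\[
(\partial_t-\Delta)f(a)=-\sum_\mu D^2 f(a)[\delta_\mu a,\delta_\mu a],
\]
which reduces to $(\partial_t-\Delta)a^2=-2(\delta_\mu a)^2$ when $f(x)=x^2$.

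The second task is to read off the sign and feed it into the log-det computation. Because the $\delta_\mu$ are derivations by anti-Hermitian elements they preserve Hermiticity, so $\delta_\mu a$ is an admissible (Hermitian) direction and operator convexity of $f$ yields $D^2 f(a)[\delta_\mu a,\delta_\mu a]\ge 0$. Writing $F=f(a)>0$, I would then compute, exactly as for $a^2$,
\[
\frac{d}{dt}\log\det F=\tau\!\left(F^{-1}\partial_t F\right)=\tau\!\left(F^{-1}\Delta F\right)-\sum_\mu\tau\!\left(F^{-1}D^2 f(a)[\delta_\mu a,\delta_\mu a]\right).
\]
The diffusion term is nonnegative: integrating by parts with the skew-adjointness of $\delta_\mu$ and $\delta_\mu(F^{-1})=-F^{-1}(\delta_\mu F)F^{-1}$ gives $\tau(F^{-1}\Delta F)=\sum_\mu\tau\big((F^{-1/2}(\delta_\mu F)F^{-1/2})^2\big)\ge 0$, since $\delta_\mu F$ is Hermitian. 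Once $\frac{d}{dt}\log\det f(a)$ is shown to be bounded below on every finite interval, $\det f(a)$ cannot reach $0$ in finite time; combined with $f(a(t))>0$ for small $t>0$ by continuity, this forces $f(a(t))>0$ for all $t>0$, which is the mechanism of Theorem \ref{positive}.

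To organise the sign bookkeeping I would invoke the Kraus integral representation $f(x)=\alpha+\beta x+\gamma x^2+\int_0^\infty\frac{x^2}{x+s}\,d\mu(s)$ with $\gamma\ge 0$ and $\mu\ge 0$, together with $\frac{x^2}{x+s}=x-s+s^2(x+s)^{-1}$. This writes the nontrivial part of $f$ as a positive superposition of exactly the two building blocks $x^2$ and $(\lambda+x)^{-1}$ whose evolution was already analysed, so the identity and the integration-by-parts step pass to the integral by linearity and monotone convergence.

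The step I expect to be the main obstacle is making the sign of $\frac{d}{dt}\log\det f(a)$ decisive: the diffusion term $\tau(F^{-1}\Delta F)\ge 0$ and the curvature term $-\sum_\mu\tau(F^{-1}D^2 f(a)[\delta_\mu a,\delta_\mu a])\le 0$ pull in opposite directions, so a clean conclusion needs a genuine inequality tying them together rather than a termwise sign. Equivalently, in the maximum-principle formulation one must verify the null-eigenvector condition $\langle v,(\partial_t-\Delta)f(a)\,v\rangle\ge 0$ at a first hypothetical kernel vector $v$ of $f(a)$; via the second divided difference this reduces to the sign of $\sum_\mu\sum_j f^{[2]}(\lambda_0,\lambda_j,\lambda_0)\,|P_j\,\delta_\mu a\,v|^2$ at the minimising eigenvalue $\lambda_0$ of $a$, where $f(\lambda_0)=0$. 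This is precisely where operator convexity must be used quantitatively, and where the direction of Jensen's operator inequality $f(P_t a_0)\le P_t f(a_0)$ for the unital completely positive heat semigroup $P_t=e^{t\Delta}$ must be reconciled with the desired lower bound; pinning down this inequality is the crux of the proof.
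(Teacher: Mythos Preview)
Your obstacle is a phantom created by a sign slip.  In this paper $\delta_1=[y,\cdot]$ and $\delta_2=-[x,\cdot]$ with $x,y$ \emph{Hermitian}, not anti-Hermitian.  For Hermitian $a$ one has $[y,a]^*= a y- y a=-[y,a]$, so each $\delta_\mu a$ is \emph{anti}-Hermitian.  Writing $\delta_\mu a=i h_\mu$ with $h_\mu$ Hermitian, the $\mathbb{C}$-bilinearity of the second Fr\'echet derivative gives
\[
D^2f(a)[\delta_\mu a,\delta_\mu a]=-\,D^2f(a)[h_\mu,h_\mu]\le 0,
\]
because operator convexity makes $D^2f(a)[h,h]\ge 0$ only for Hermitian directions $h$.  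This is exactly what you see in the model case: $(\partial_t-\Delta)a^2=-2(\delta_\mu a)^2=2h_\mu^2\ge 0$.  Hence your key identity yields $(\partial_t-\Delta)f(a)\ge 0$, and in
\[
\frac{d}{dt}\log\det F=\tau(F^{-1}\Delta F)-\sum_\mu\tau\!\big(F^{-1}D^2f(a)[\delta_\mu a,\delta_\mu a]\big)
\]
\emph{both} terms are nonnegative.  There are no opposing signs to reconcile; the $\log\det$ device (or the maximum principle for supersolutions) closes the argument immediately, and the digression on Jensen's inequality for $P_t$ and the null-eigenvector analysis is unnecessary.

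By contrast, the paper does not attempt a unified differential computation at all.  It simply invokes the integral representation
\[
f(x)=f(0)+a x+ b x^2+\int_0^\infty\Big(\frac{x}{1+\lambda}-1+\frac{\lambda}{x+\lambda}\Big)\,d\mu(\lambda)
\]
and appeals to the three already-treated building blocks $x$, $x^2$, $(x+\lambda)^{-1}$, for each of which ``$f(a_0)>0\Rightarrow f(a(t))>0$'' was checked separately (the last line of the proof also tacitly uses $a_0>0$).  Your route is genuinely different and in fact cleaner: one identity plus operator convexity, no representation formula and no case splitting.  The only repair needed is the Hermiticity parity of $\delta_\mu a$; once that is fixed your argument is complete.
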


\begin{proof}
According to \cite{Ki}, we have that any continuous operator convex
real function on $[0, \infty]$ can be expressed as
$$f(x)=f(0)+ax+bx^2+\int^\infty_0(\frac{x}{1+\lambda}-1+\frac{\lambda}{x+\lambda})d_\mu (\lambda),$$
where $a,b\geq 0$, $d_\mu$ is a nonnegative measure.

Since for $f(x)=x,x^2,\frac{1}{x+\lambda}$, we already know that if
$f(a_0)>0$, then $f(a)>0$, $\forall t>0$.

Hence, for any Hermitian positive definite matrix $a_0>0$ with $f(a_0)>0$, then
$$f(a)>0, \forall t>0.$$
This completes the proof of Theorem \ref{convex}

\end{proof}

\end{document}